\theoremstyle{plain}
\newtheorem{theorem}{Theorem}
\theoremstyle{definition}
\newtheorem{definition}{Definition}
\newtheorem{lemma}{Lemma}
\newif\ifNotUse  
\begin{document}
\title{Economical and efficient network super points detection based on GPU}
\author{Jie Xu}
\affiliation{%
  \institution{School of computer science and engineer, Southeast university}
  \city{Nanjing}
  \state{China}
 }
\email{xujieip@163.com}

\iftoggle{IEEEcls}{
\author{\IEEEauthorblockN{Jie Xu                                                
                                  }
\IEEEauthorblockA{School of Computer Science and Engineering\\ South East University \\
Nanjing, China\\
Email: xujieip@163.com}

}

}

\begin{abstract}
Network super point is a kind of special host which plays an important role in network management and security. For a core network, detecting super points in real time is a burden task because it requires plenty computing resources to keep up with the high speed of packets. Previous works try to solve this problem by using expensive memory, such as static random access memory, and multi cores of CPU. But the number of cores in CPU is small and each core of CPU has a high price. In this work, we use a popular parallel computing platform, graphic processing unit GPU, to mining core network's super point. We propose a double direction hash functions group which can map hosts randomly and restore them from a dense structure. Because the high randomness and simple process of the double direction hash functions, our algorithm reduce the memory to smaller than one-fourth of other algorithms. Because the small memory requirement of our algorithm, a low cost GPU, only worth 200 dollars, is fast enough to deal with a high speed network such as 750 Gb/s. No other algorithm can cope with such a high bandwidth traffic as accuracy as our algorithm on such a cheap platform. Experiments on the traffic collecting from a core network demonstrate the advantage of our efficient algorithm.
\end{abstract} 
 
  \iftoggle{IEEEcls}{
\begin{IEEEkeywords}
Super hosts detection, GPGPU, network monitor, parallel computing, scanner detection, DDoS

\end{IEEEkeywords}
}

\keywords{super point detection, high speed network measurement, GPU computing, network security}

\maketitle
\section{Introduction}

Host's cardinality, the number of other hosts contacting with it during a time window, is a significant attribute in network security and management\cite{Instr:AnIntrusionDetectionPreventionSystemCloudComputing}\cite{Instr:AsurveyIntrusionDetectionTechniquesInCloud:ChiragModi}. Super point is a host whose cardinality is bigger than a predefined threshold. It plays important roles in many network events, for example scanners\cite{Instruction:AcomprehensiveReview}, DDos attackers or victims, servers, instruction detection\cite{Instr:AnEfficientIntrusionDetectionSystemBasedSupportVectorMachines:LiYinHui} and so on.

Network scanning is one of the most common forms of network intrusion and often a prelude to a more damaging attack \cite{Scan:NetworkScanDetectionWithLQS}. Attackers use network scan to map the topology of a target network and identify active hosts running interesting network services. When scanning started, thousands of packets sending to different hosts would transmit from attacker's host, which let the attacker's host become a super host \cite{Scan:EvasionResistantNetworkScanDetection}. By super hosts detection and monitor, network scan can be detected effectively and we can prevent a future attacker.

Distributed Denial of Service (DDos) attacks, which caused tremendous economic losses every year, is a big threat for network security \cite{DDos:AttackingDDoSAtSource}. Attackers, often hiding in a Botnet, flood huge faked packets with different source IP addresses to a server which would cause the normal users failed to visit the server or the server crashed down immediately. Under DDos attacks, a server's cardinality would be much bigger than normal \cite{DDos:ImplementingPushbackRouter} \cite{DDos:ATaxonomyOfDDoSAttack}. By monitoring the cardinality fluctuation of a super point, we can detect a DDos attack immediately. 

Finding super hosts on core network is a challenge because there are massive IP packets passing in the traffic every second. If we had infinite memory and fast enough accessing speed, we could store each IP address in a hash table and worked out the cardinality accurately by storing them in memory one by one. In fact, this method would be too expensive and slow to apply to core network whose speed is more than 40 Gb/s. Estimating method, compromise between memory consumption and detection accuracy, is widely used to solve this problem\cite{HSD:streamingAlgorithmFastDetectionSuperspreaders}\cite{HSD:ADataStreamingMethodMonitorHostConnectionDegreeHighSpeed}. 

Traffic speed could be measured from packets speed perspective(million packets per second Mpps) or size perspective(gigabit per second Gb/s). Algorithms will detect super point according to IP addresses extracting from IP packets. So the packets speed has more influence. 

According to report on Caida \cite{expdata:Caida}, IP packet's average size is 800 Bytes. So a 40Gb/s network will transfer 6 million packets every second. In another word, its packet speed is 6 Mpps. But the fastest super point algorithm \cite{HSD:LineSpeedAccurateSuperspreaderIdentificationDynamicErrorCompensation} running on CPU can only scanning 2 million packets in theory base on fast and very expensive memory, static random memory SRAM. So it's no possible to detect super points in real time for a network whose bandwidth is higher than 13.3 Gb/s. The bottleneck of the speed is memory latency and packets scanning speed. But the frequency of processor can not increase easily any more and the processing ability of single core is limited.

In order to deal with high speed network in real time, parallel computing devices are essential. Although nowadays CPU contains many cores than before, from 2 cores to 22 cores or even more, a multi-cores CPU is very expensive and the advanced mainboard , which cost more than 2000 dollars, can only install 4 CPU at most. This limits the the parallel computing ability of CPU platform.

Graphic processing unit GPU is a famous parallel computing platform widely used in many areas. Unlike CPU, a common GPU contains hundreds or thousands of physical processing cores and has lower price per core than that of CPU. For example, a 200 dollars GPU, Nvidia GTX 950, contains 640 cores. While a 400 dollars CPU, Intel i7-7700K, only contains 4 cores. What's more, a mainboard can install several GPUs and GPUs can communicate with each other directly too. GPU has the potential ability to detection super points economically and efficiently with a suitable algorithm. In this work, we devise a novel algorithm which can be deployed on a cheap GPU to deal with a core network traffic. Our algorithm is memory efficient and fast enough to process 120 million packets per second because a novel double direction hash functions group are designed to reconstruct super points from a dense memory structure.

In the next section we will introduce existed super points detection methods and other related works. In section 3, our novel super points detection algorithm will be discussed in detail. Section 4 describes how to deploy our algorithm on GPU. Experiments on real-world core network traffic are shown in section 5. We make a conclusion in section 6.
\section{Related work}
Many scholars have proposed several super points detection algorithms. Shobha et al.\cite{HSD:streamingAlgorithmFastDetectionSuperspreaders} proposed a scaleable algorithm which does not need to keep every host's state. Cao et al.\cite{HSD:identifyHighCardinalityHosts} used a pair-based sampling method to eliminate the majority of low opposite number hosts and reserved more resource to estimate the opposite number of the resting hosts. Estan et al.\cite{HSD:bitmapCountingActiveFlowsHighSpeedLinks} proposed two bits map algorithms based on sampling flows. But all of these methods were based on sampling flows which limited their accuracy. 

Wang et al.\cite{HSD:ADataStreamingMethodMonitorHostConnectionDegreeHighSpeed} devised a novel structure, called double connection degree sketch (DCDS), to store and estimate different hosts cardinalities. They updated DCDS by setting several bits to one simply. In order to restore super points at the end of a time period, which bits to be updated were determined by Chinese Remainder Theory(CRT) when parsing a packet. By using CRT, every bit of DCDS could be shared by different hosts. But the computing process of CRT was very complex which limited the speed of this algorithm.

Liu et al.\cite{HSD:DetectionSuperpointsVectorBloomFilter} proposed a simple method to restore super hosts basing on bloom filter. They called this algorithm as Vector Bloom Filter Algorithm(VBFA). VBFA used the bits extracted from IP address to decide which bits to be updated when scanning a packet. Compared with CRT, bit extraction only needed a small operation. But the bits extraction from an IP address were not random enough to spread hosts uniformly in VBF and the memory usage of this algorithm was very low.

Most of the previous works only focused on accelerating speed by adapting fast memory but neglected the calculation ability of processors. Seon-Ho et al.\cite{HSD:GPU:2014:AGrandSpreadEstimatorUsingGPU} first used GPU to estimate hosts opposite numbers. They devised a Collision-tolerant hash table to filter flows from origin traffic and used a bitmap data structure to record and estimate hosts' opposite numbers. But this method needed to store IP address of every flow while scanning traffic because they could not restore super points from the bitmap directly. Additional candidate IP address storing space increased the memory requirement of this algorithm.

In this paper, we design a novel algorithm which can restore super points efficiently without keeping a IP addresses list while scanning packets. 
\section{Super point detection algorithm}
In this section we will introduce our super points detection and estimation algorithm based on novel double direction hash functions group. Host's cardinality calculation is the foundation of estimating algorithm. So we first introduce the cardinality estimation algorithm used in our algorithm.

\subsection{Cardinality Estimation}
Suppose there are two networks: $NT_a$ and $NT_b$. Let $A$ and $B$ represent the hosts set of these two networks. $NT_a$ and $NT_b$ communicate with each other by a group of edge routers as shown in figure \ref{TwoNetworkModel}. These routers compose the edge of $NT_a$ and $NT_b$, written as $E(a,b)$. Through $E(a,b)$ we can observe packets stream in two directions: from $NT_a$ to $NT_b$ and from $NT_b$ to $NT_a$. For a host $a_0$ in $NT_a$, the task of cardinality estimation is to get the number of hosts in $NT_b$ that communicate with $a_0$ in a certain time period. "communicate" means send packets to or receive packets from $a_0$. Let $CO(a_0)$ represent the set of these hosts. Calculate $a_0$'s cardinality is to find the number of elements in $CO(a_0)$, written as $|CO(a_0)|$. Because a host in $CO(a_0)$ may have several packets communicating with $a_0$, we should find out the number of distinct hosts from packets stream. Super point is a host whose cardinality is more than a certain threshold $\theta$. After getting hosts' cardinalities, we can filter super points easily.

\begin{figure}[!ht]
\centering
\includegraphics[width=0.47\textwidth]{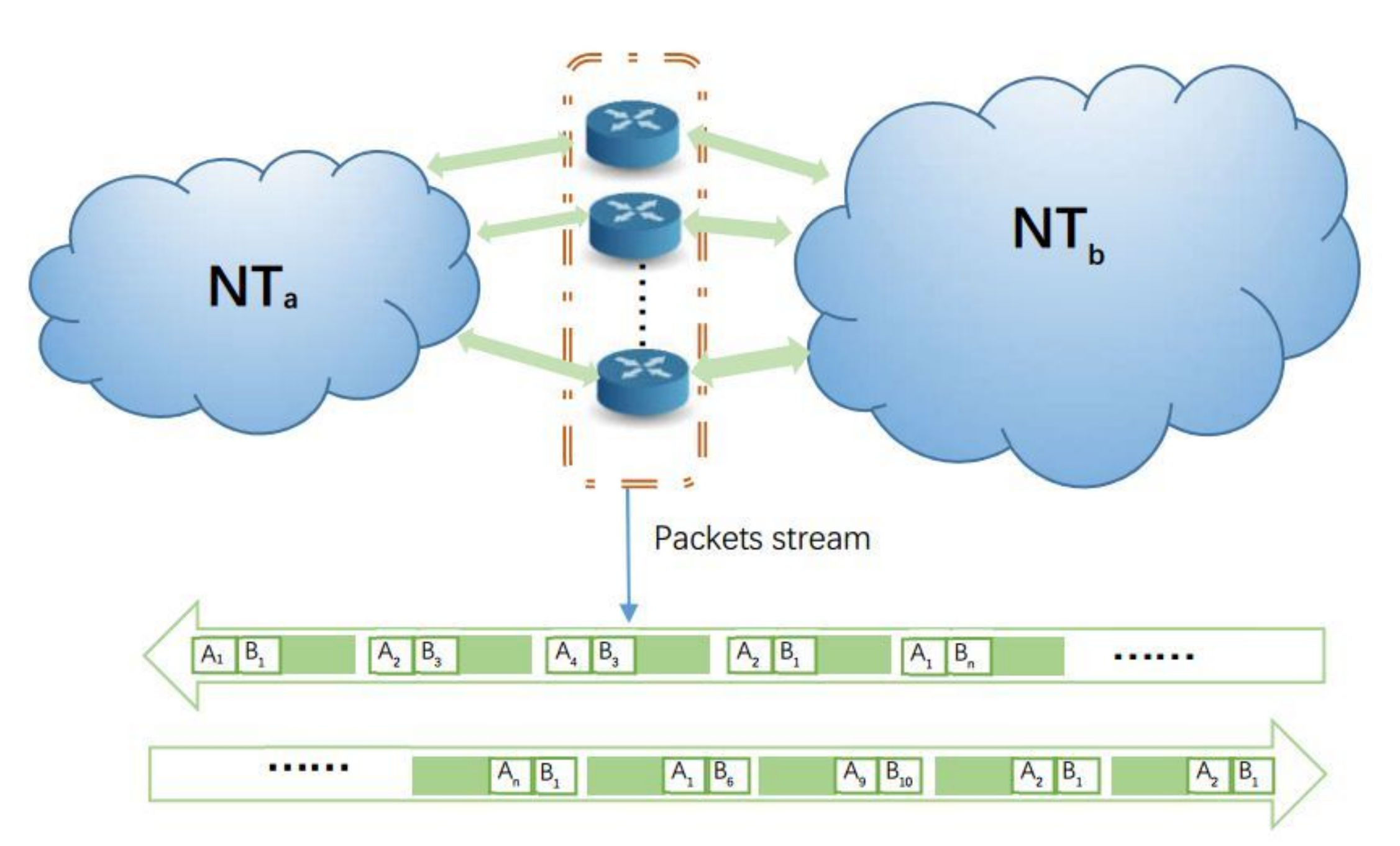}
\caption{Edge of two networks}
\label{TwoNetworkModel}
\end{figure}

WHANG et al \cite{DC:aLinearTimeProbabilisticCountingDatabaseApp} gives a simple way  to estimate the distinct number from data stream, called linear estimator. We use linear estimator to calculate host's cardinality. At the beginning, a bit vector with $g$ zero bits is allocated. When scanning a packet with a host $b_0$ in $NT_b$ and $a_0$ in $NT_a$ at $E(a,b)$, we set a random bit to 1. This bit is select by $b_0$'s hashed value with a random hash function $H_1$\cite{hash_UniversalOne-wayHashFunctionsAndTheirCryptographicApplications}. After scanning all packets in a time period, we can report the estimating cardinality $Est'$ by the follow equation:
\begin{equation}\label{eqt_LDC_estValueFromZeroBitsN}
 {Est}'=-g*ln(\frac{z_0}{g})
\end{equation}
$z_0$ is the number of zero bits remaining at the end of the time period. We call this algorithm as linear estimator.

Linear estimator uses fixed number of bits and has a fast processing speed because for every packet it only needs to access memory once. But in a high speed network, it's too expensive to allocate a linear estimator for every host because of the following reasons.

\begin{enumerate}
\item Memory requirement. There are huge hosts in high speed network. Allocating a linear estimator for every host requires great memory. For example, when $g$ is set to 1024, and there are 8 million hosts in $NT_a$, 1 GB memory is desired. Allocating or transmitting such huge memory between different edge routers is a heavy burden. 
\item Linear estimator locating. These hosts in a network may have discrete addresses and we can't locate their linear estimators directly by their IP addresses. In order to find a host's linear estimator, we have to use hash table or balance tree to store these linear estimators. But hash table has collision problems\cite{hashTable:HTree:AnEfficientIndexStructureEventMatchingContentBasedPublish}  and balance tree requires additional memory accessing\cite{Cormen:2009:IAT:IntroductionAlgorithmsThirdEdition}.
\end{enumerate}

To solve this problem, we design a novel structure which can detect super points and estimate their cardinalities using fixed number of linear estimators. We will introduce our novel algorithm in detail in the following part.
\subsection{Scanning packets stream}
Because it is low efficient that a linear estimator is used by a host exclusive, we let a linear estimator can be shared by several hosts. Let $LA^{g,k}$ represent an array of $2^k$ estimators with $g$ bits in each one. Every host will select a linear estimator from $LA^{g,k}$ to estimator its cardinality. 

But this will cause a higher estimation value than the real cardinality. In order to reduce the effect of sharing linear estimator, we use $r$ $LA^{g,k}$s together and each host will select $r$ linear estimators from every $LA^{g,k}$ to record its cardinality at the same time as described in figure \ref{LinearEstimatorArrays_model}

\begin{figure}[!ht]
\centering
\includegraphics[width=0.47\textwidth]{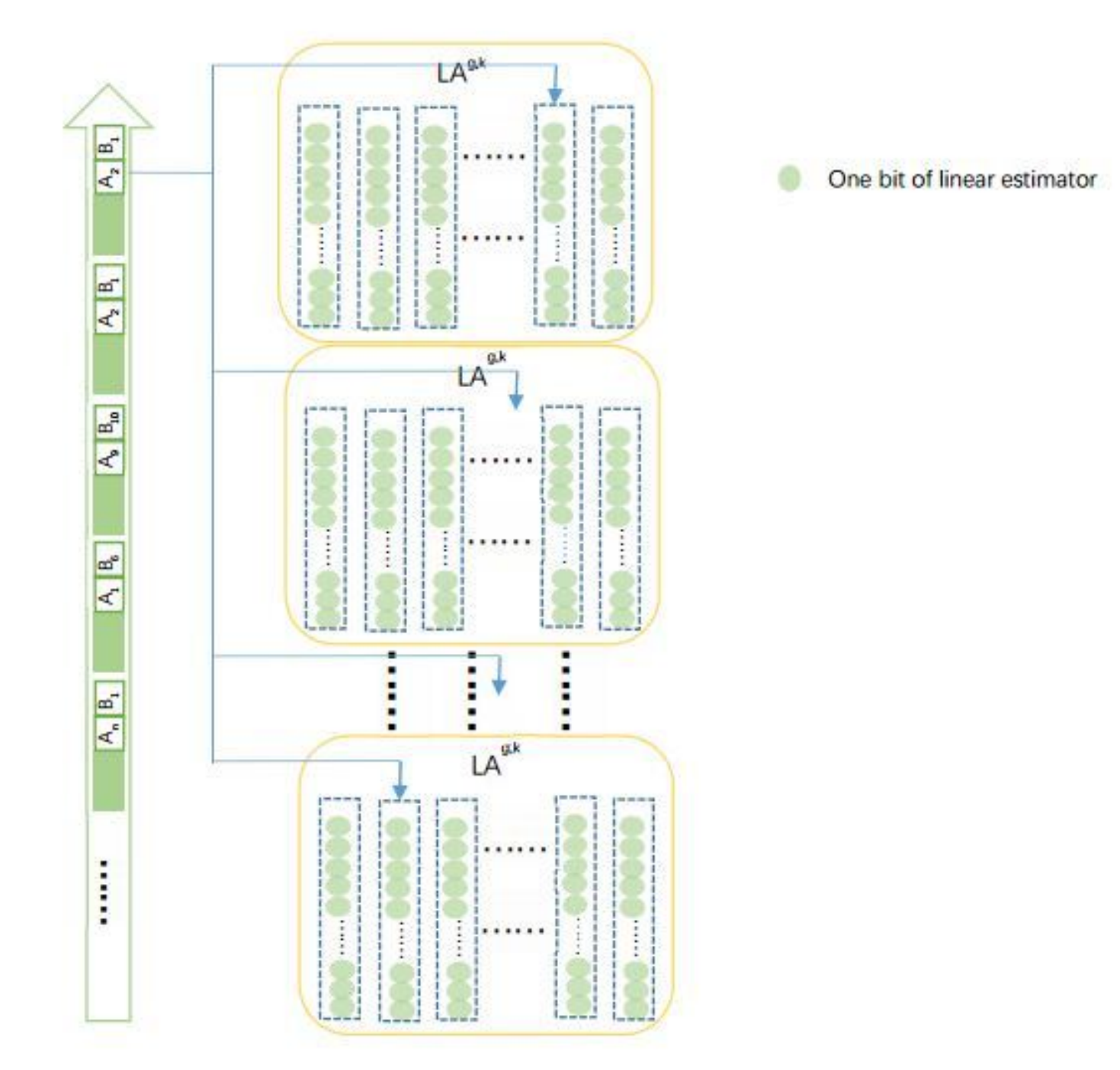}
\caption{Linear estimator arrays}
\label{LinearEstimatorArrays_model}
\end{figure}

For a host $a_0$ in $NT_a$, let $LE(a_0)$ represent the set linear estimators in these $r$ $LA^{g,k}$s that used to record $a_0$'s cardinality. At the end of a time period, we will calculate the union linear estimator $ULE(a_0)$ by applying bit-wise "AND" operation to all linear estimators in $LE(a_0)$ and get $a_0$'s estimating value from $ULE(a_0)$.

In order to get a high accuracy estimation value and detect super points from these $r$ $LA$, how to select linear estimators in different $LA^{g,k}$ should comply with the following two requirements:

\begin{enumerate}
\item Each estimator in $LA$ should have the same probability to be used by different hosts. This will make sure that every linear estimator is fully utilized and the algorithm can acquire the highest accuracy with the smallest memory.
\item For a super point $a_0$, we can reconstruct it from $LE(a_0)$. At the end of a time period, all what we have are $r$ $LA^{g,k}$s that record all hosts' cardinalities. Super points will be reconstructed from all of these $LA^{g,k}$s.
\end{enumerate}

The key to these requirements is how to map a host to $r$ linear estimators in $LA$( in the rest of this paper, we use $LA$ to represent $LA^{g,k}$ simply). We devise a novel double direction hash functions group, written as $DHG$, which contains $r$ random hash functions. $DHG$ can hash an integer $i$ to $r$ random values between 0 and $2^k-1$ where $i$ and $k$ are both positive integers. What's more, $i$ could be restored from its hashed $r$ values. We can use equation \ref{eqt_LDC_estValueFromZeroBitsN} to find out which linear estimator's estimating value is more than $\theta$ and we call this kind of estimator as hot estimator. When $a_0$ is a super point, $LE(a_0)$ will appear in hot estimators. By $DHG$, we can restore $a_0$ from $LE(a_0)$.
 
The first hash function $DH_0$ in $DHG$ is a random hash function\cite{hash_UniversalOne-wayHashFunctionsAndTheirCryptographicApplications} that maps a host $a_0$ to a value between 0 and $2^k-1$ uniformly. $DH_0$ has excellent randomness, but it requires great operations and is not reversible. For the sake of super point reconstruction and randomness, the rest $r-1$ hash functions are determined by the following equation:
\begin{equation}\label{eqt_DHG_hashFuntions}
DH_i(a_0)=((a_0>>(i-1)*\alpha))mod (2^k)) \bigoplus DH_0(a_0) , 1 \leq i \leq r-1
\end{equation}
$\alpha$ is a positive integer, "$>>$"is bitwise right shift operation and $\bigoplus$ is a bitwise "exclusive or" operation. $DH_0(a_0)$ is a random seed which makes sure all the $r$ hash functions has a high randomness. Because $k$ is a positive integer, $(a_0>>(i-1)*\alpha))mod (2^k)$ is successive $k$ bits of $a_0$ starting from $(i-1)*\alpha$. We call these bits as bit block, written as $BL(i)$. $BL(i)$ could be restored from $DH_i(a_0)$ by equation
\ref{eqt_DHG_ip_part_reverise}.
\begin{equation}\label{eqt_DHG_ip_part_reverise}
BL(i) = DH_i(a_0) \bigoplus DH_0(a_0) , 1 \leq i \leq r-1
\end{equation}

If $BL(i)$ is long enough, we can reconstruct $a_0$ by concatenating these $r-1$ $BL$s. We call these $r$ $LA$s hashing by $DHG$ as Double direction Hash Linear estimators Array  $DHLA$. Algorithm \ref{updateDHLA} shows how to scan packets and update $DHLA$.

\begin{algorithm}                       
\caption{Update DHLA}          
\label{updateDHLA}                            
\begin{algorithmic}                    
    \Require {\\ IP address pair $<a_0,b_0>$,\\
     DHLA}     
\State $BITidx \Leftarrow H_1(b_0)$
\For{$i \in [0,r-1]$}
\State $LEidx \Leftarrow DH_i(a_0)$
\State $LEp$ point to the $LEidx$th $LE$ in the $i$ $LA$
\If {the $BITidx$th bit of $LEp$ is 1}
\State Continue
\Else
\State Set the $BITidx$th bit of $LEp$ to 1
\EndIf
\EndFor 
\State Return
\end{algorithmic}
\end{algorithm}

Algorithm \ref{updateDHLA} has very little operations which let it have a fast processing speed for core network which forwards millions of packets every second. $DHLA$ maintains all hosts' cardinalities and when $\alpha$ and $r$ is reasonably selected, super points could be restored efficiently. In the next part, how to restore super points from $DHLA$ will be discussed in detail.

\subsection{Restoring super points}
$BL(i)$ is a fraction of $a_0$. If every bit of $a_0$ is contained in some $BL(i)$, $a_0$ could be reconstructed entirely. Figure \ref{BitsBlockExtraction} illustrates the relationship between $a_0$ and $BL(i)$.

\begin{figure}[!ht]
\centering
\includegraphics[width=0.47\textwidth]{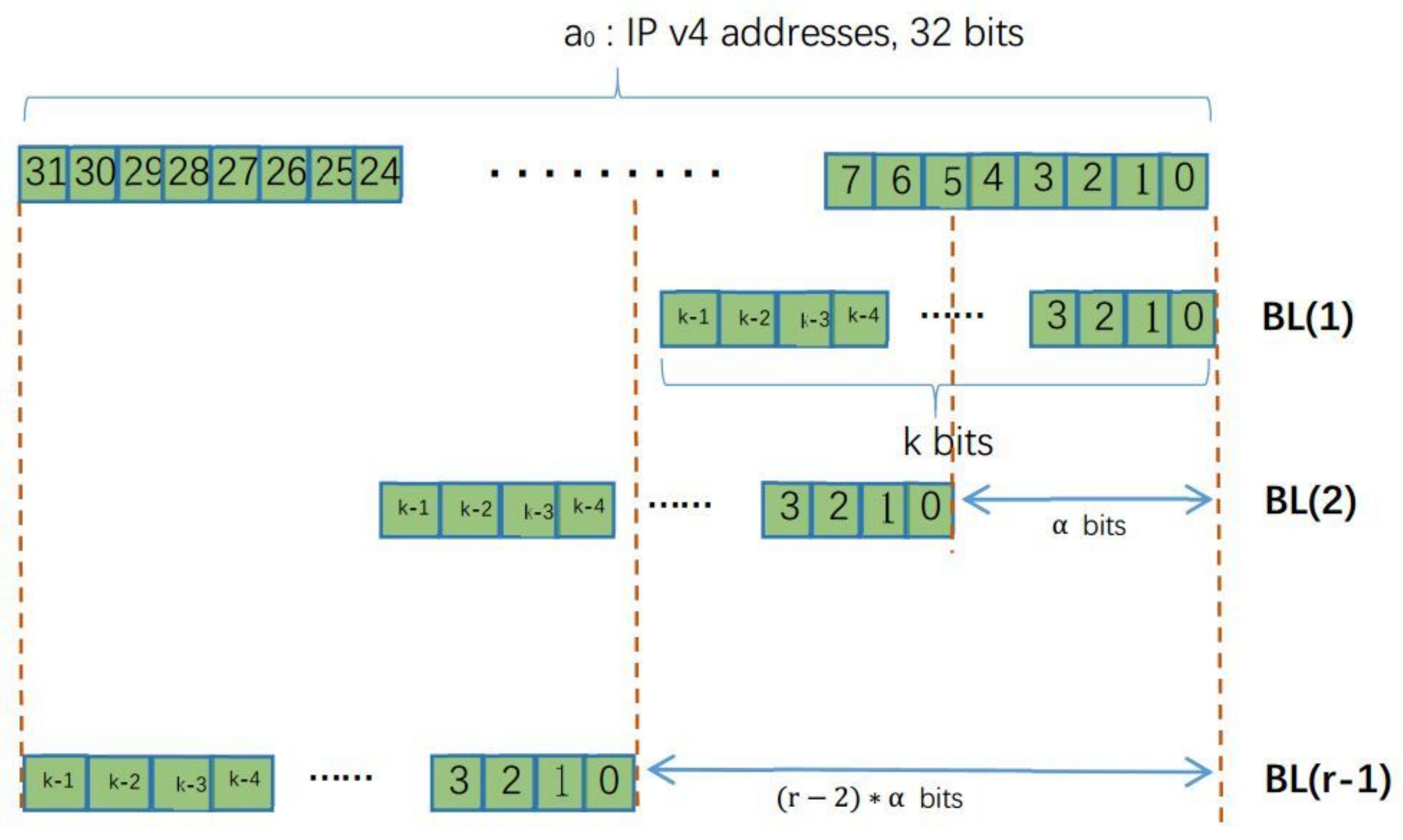}
\caption{Bit block extracted from a IP address}
\label{BitsBlockExtraction}
\end{figure}

In order to reconstruct $a_0$, $r$ $\alpha$ and $k$ should comply with the following two restrictions:
\begin{enumerate}
\item $\alpha \leq k$. This makes sure that there are no missing bits between two neighboring $BL$s. When $0 < \alpha < k$, there will be $k-\alpha$ bits be the same in two neighboring $BL$s. These duplicating bits will help us to filter candidate IP from $DHLA$.

\item $(r-2)*\alpha + k \geq 32$. This restriction means that the last $BL$ will contains the last bit of $a_0$.
\end{enumerate}
When these two conditions are met, $a_0$ could be restored from these $r-1$ $BL$s by comparing and concatenating operations and $BL(i)$ could be acquired by equation \ref{eqt_DHG_ip_part_reverise}. 

But do not store $LE(a_0)$ when scanning packets, we can't get $BL(i)$ directly. According to the definition of super points, when $a_0$ is a super point, each linear estimator in $LE(a_0)$ is a hot estimator. We can first find out all hot estimators in every $LA$ and then test them one by one to restore super points. After getting $a_0$, we can get $ULE(a_0)$ and estimate $a_0$'s cardinality according to the zero number in $ULE(a_0)$. But equation \ref{eqt_LDC_estValueFromZeroBitsN} could not be used here directly because some bits of $ULE(a_0)$ may be set by other hosts. In order to get a high accuracy result, we should estimate the number of "1" bits set by other hosts and remove them. The following lemma and theorem show how to do this. Flow in this paper means the set of packets with the same IP addresses.

\begin{lemma}
\label{la-1BitProbability}
In a certain time period, if there are $w$ flows in the total traffic, the probability that a bit in a $LA$ to be set to 1 is $\psi=1-e^{-\frac{w}{g*(2^k)}}$
\end{lemma}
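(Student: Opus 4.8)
The plan is to treat a single $LA$ as a collection of $N = g \cdot 2^k$ bits and to recognize the update procedure of Algorithm~\ref{updateDHLA} as a classical balls-into-bins (occupancy) process: each of the $w$ flows in the traffic acts as one ball thrown into one of the $N$ bit-bins. First I would verify that every flow sets exactly one bit in a fixed $LA$. A flow is a set of packets sharing the same address pair $\langle a_0, b_0\rangle$; by Algorithm~\ref{updateDHLA}, within the $i$-th $LA$ it always touches the single bit at row $H_1(b_0)$ of the estimator indexed by $DH_i(a_0)$, and repeated packets of the same flow are idempotent. Hence $w$ distinct flows correspond to exactly $w$ bit-setting operations (with possible collisions) in that $LA$.

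Next I would argue that the target bit is chosen uniformly at random. Since $H_1$ is a random hash onto the $g$ positions of an estimator, $\Pr[H_1(b_0)=\beta]=1/g$; and since $DH_i(a_0)=BL(i)\bigoplus DH_0(a_0)$ with $DH_0$ a uniform random hash onto $[0,2^k-1]$, the XOR with the fixed block $BL(i)$ leaves $DH_i(a_0)$ uniform, so $\Pr[DH_i(a_0)=e]=1/2^k$. Treating $H_1$ and $DH_0$ as independent, the combined probability that one flow lands on a prescribed bit is $\frac{1}{g}\cdot\frac{1}{2^k}=\frac{1}{N}$, so a single flow \emph{misses} that bit with probability $1-\frac{1}{N}$.

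The computation then follows the standard occupancy estimate. Modelling the $w$ flows as independent, the probability that a fixed bit is left at $0$ by all of them is $\left(1-\frac{1}{N}\right)^w$, and applying the approximation $\left(1-\frac{1}{N}\right)^w \approx e^{-w/N}$, valid for large $N$, gives the probability that the bit is set to $1$ as
\begin{equation*}
\psi = 1-\left(1-\frac{1}{N}\right)^w \approx 1-e^{-w/N}=1-e^{-\frac{w}{g\cdot 2^k}},
\end{equation*}
which is the claimed formula.

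The main obstacle is justifying the two idealizations that make the product $\left(1-\frac{1}{N}\right)^w$ exact: the \emph{uniformity} of each flow's target bit and, more delicately, the \emph{independence} across flows. Independence is not literally true, since two flows sharing the same source $a_0$ reuse the same estimator index $DH_i(a_0)$, and two flows sharing $b_0$ reuse $H_1(b_0)$, so strictly the bit positions are correlated. I would therefore lean on the random-hash assumption already invoked for $H_1$ and $DH_0$ to treat distinct flows as independent uniform draws, and note that this is the standard modelling step underlying linear-counting sketches; the remaining work is only the elementary $\left(1-\frac{1}{N}\right)^w \to e^{-w/N}$ approximation.
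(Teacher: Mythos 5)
Your proof is correct and reaches the paper's formula, but by a somewhat different decomposition. The paper argues in two stages: by uniformity of the host-to-estimator hash, each of the $2^k$ linear estimators in an $LA$ receives $w/2^k$ flows on average; it then inverts equation \ref{eqt_LDC_estValueFromZeroBitsN} with ${Est}'=w/2^k$ to obtain the expected zero count $z'=g\,e^{-w/(g\cdot 2^k)}$ of a single estimator, and reads off $\psi=(g-z')/g$. You instead collapse the whole $LA$ into one array of $N=g\cdot 2^k$ bits and run the balls-into-bins computation directly: each flow sets one uniformly chosen bit, so a fixed bit stays zero with probability $\left(1-\frac{1}{N}\right)^w\approx e^{-w/N}$. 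The two arguments are mathematically the same in the end, since the linear-counting formula the paper invokes is itself derived from exactly your occupancy estimate; but yours is more self-contained and more explicit about the hypotheses. You verify that a flow touches exactly one bit per $LA$ (idempotence of repeated packets of the same pair), that the target bit is uniform because $DH_i(a_0)=BL(i)\oplus DH_0(a_0)$ inherits uniformity from $DH_0$, and you flag the independence-across-flows idealization (flows sharing $a_0$ or $b_0$ land in correlated positions) that the paper silently assumes. The paper's version buys brevity by treating equation \ref{eqt_LDC_estValueFromZeroBitsN} as a black box applied to the expected per-estimator load $w/2^k$; yours buys a cleaner accounting of where the randomness assumptions enter, at the cost of having to state and defend them.
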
 
\begin{proof}
Because every host is mapped to a linear estimator in a $LA$ uniformly, a linear estimator will receive $\frac{w}{2^k}$ flows. By equation \ref{eqt_LDC_estValueFromZeroBitsN}, we can calculate the "0" bit number ${z}'$ of linear estimator which is ${z}'=g*e^{-\frac{w}{g*2^k}}$. Because ${z}'$ bits are uniform distributed in a linear estimator's $g$ bits, the probability of a bit being set to 1 is $\psi=\frac{g-{z}'}{g}=1-e^{-\frac{w}{g*2^k}}$.
\end{proof}

$w$ could be estimated by equation \ref{eqt_LDC_estValueFromZeroBitsN} from every $LA$ because $LA$ could be regarded as a bit set used to record the flow number. We can get the "0" bit number $ZR(i)$ of the $i$th $LA$, and estimate the flow number ${w(i)}'$ from it ${w(i)}'=-g*2^k*ln(\frac{ZR(i)}{g*2^k})$. We use the average value of all the $r$ estimating value of every $LA$ as the flow number ${w}'=\frac{\sum_{i=0}^{r-1}{w(i)}'}{r}$.

In $LA$, several hosts may be mapped to the same linear estimator. It causes that the "0" bit number of $ULE(a_0)$ is smaller than that when the linear estimators are used by $a_0$ exclusively.
\begin{definition}[Sharing zero number/Exclusive zero number]
\label{def-Sharing-ExclusiveZeroN}
For a host $a_0$, its sharing "0" bit number is the number of zero bit in $ULE(a_0)$, written as $SZ(a_0)$; its exclusive zero number is the number of zero bit of a linear estimator which only used by $a_0$, written as $EZ(a_0)$.
\end{definition}

\begin{theorem}
\label{Th-estimatingZN}
 For a host $a_0$, its $EZ$ can be acquired from $SZ(a_0)$ by equation $EZ(a_0)=\frac{SZ(a_0)}{1-{\psi}^{r}}$.
 \end{theorem}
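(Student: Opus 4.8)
The plan is to analyze the claim one bit position at a time and to relate $SZ(a_0)$ and $EZ(a_0)$ through an expectation argument. First I would fix attention on the $g$ bit positions shared by every linear estimator and partition them according to whether $a_0$ itself set them. The central observation is that, by Algorithm \ref{updateDHLA}, whenever $a_0$ communicates with some $b_0$ it sets the \emph{identical} position $H_1(b_0)$ in all $r$ estimators of $LE(a_0)$; hence every bit that $a_0$ sets is 1 in all $r$ estimators and therefore also 1 in $ULE(a_0)$. Consequently the positions that would be zero when $a_0$ owns an estimator exclusively are exactly the positions $a_0$ never touched, and there are precisely $EZ(a_0)$ of them.

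Next I would compute, for a single such untouched position $j$, the probability that it survives as a zero in $ULE(a_0)$. Since $ULE(a_0)$ is the bit-wise AND of the $r$ estimators, position $j$ is zero in $ULE(a_0)$ unless every one of the $r$ estimators has position $j$ set by some other host. Lemma \ref{la-1BitProbability} tells us that each such estimator bit is set to 1 with probability $\psi$. Treating the $r$ arrays as independent---which is reasonable because the $r$ hash functions of $DHG$ distribute the other hosts into the $r$ arrays separately---the probability that position $j$ is forced to 1 in all $r$ estimators is $\psi^r$, so it remains zero with probability $1-\psi^r$.

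Finally I would assemble these per-position probabilities by linearity of expectation (which needs no independence across positions). Each of the $EZ(a_0)$ untouched positions remains a zero in $ULE(a_0)$ with probability $1-\psi^r$, so the expected number of zero bits in $ULE(a_0)$ is $EZ(a_0)(1-\psi^r)$. Identifying the observed $SZ(a_0)$ with this expectation gives $SZ(a_0)=EZ(a_0)(1-\psi^r)$, and solving for $EZ(a_0)$ yields the claimed $EZ(a_0)=\frac{SZ(a_0)}{1-\psi^r}$.

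The main obstacle I anticipate is justifying the independence assumption and, relatedly, making the conditioning precise: Lemma \ref{la-1BitProbability} provides an \emph{unconditional} probability that a bit is set, whereas here I need the probability that a bit is set \emph{by hosts other than $a_0$} in an estimator already selected by $a_0$. I would argue that excluding $a_0$'s own flows perturbs the effective flow count only negligibly relative to the total $w$, so $\psi$ remains a valid approximation, and that the high randomness of $DH_0$ (the random seed guaranteeing uniform spreading) scatters the remaining hosts across the $r$ arrays closely enough to independence for the product $\psi^r$ to be legitimate.
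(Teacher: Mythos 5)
Your proof follows essentially the same route as the paper's: both partition the $g$ positions into those touched by $a_0$ and the $EZ(a_0)$ untouched ones, use the probability $\psi^{r}$ that an untouched bit is forced to 1 in all $r$ estimators of $LE(a_0)$, and equate the resulting expectation with the observed count to solve for $EZ(a_0)$. You count zeros where the paper counts ones ($g-EZ(a_0)+EZ(a_0)\psi^{r}$ versus $EZ(a_0)(1-\psi^{r})$, the same equation rearranged), and you are more explicit about the independence and conditioning assumptions that the paper leaves implicit, but the argument is the same.
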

\begin{proof}
Suppose a linear estimator only records $a_0$'s opposite hosts. At the end of a time window, the "1" bit number of this linear estimator is $g-EZ(a_0)$. For a bit in $ULE(a_0)$, if it is not set by $a_0$, it will be set to one with probability ${\psi}^{LR}$ because it must be set by some host in every linear estimator in $LE(a_0)$. The expecting "1" bit number of these $EZ(a_0)$ bits is $EZ(a_0)*{\psi}^{r}$. Then the "1" bits number is expected to be $g-EZ(a_0)+EZ(a_0)*{\psi}^{r}$. At the end of a time window, we get the number of "1" bit $g-SZ(a_0)$ by counting zero bits. Let the expecting value equal to the watched value $g-EZ(a_0)+EZ(a_0)*{\psi}^{LR}=k-SZ(a_0)$. Reform this equation, we have $EZ(a_0)=\frac{SZ(a_0)}{1-{\psi}^{r}}$.
\end{proof}

In theorem \ref{Th-estimatingZN}, $SZ(a_0)$ is the zero number in $ULE(a_0)$. $a_0$'s cardinality $C(a_0)$ could be estimated from $EZ(a_0)$ by the following equation:
\begin{equation}\label{eqt_cardinalityEstimating}
C(a_0)=-g*ln(\frac{SZ(a_0)}{g-g*{\psi}^{r}})
\end{equation}

 Algorithm \ref{restoreIP} shows how to detect super points from $DHLA$ and estimate their cardinalities.

\begin{algorithm}                       
\caption{restore super points}          
\label{restoreIP}                            
\begin{algorithmic}                    
\Require {\\ DHLA,\\$\theta$}  
\Ensure {Super points list $SPL$}
\State $Zmin\Leftarrow g*e{-\frac{\theta}{g}}$ 
\For { $i \in [0,r-1]$}
\For {$j \in [0,2^k-1]$}
  \State $z_0 \Leftarrow g-|LA(i,j)|$
   \If {$z_0<Zmin$}
   \State add $j$ into $HE(i)$
   \EndIf
\EndFor
\EndFor  
\For {each $<CL_0,CL_1,CL_2,\cdots,CL_{r-1}> \in <HE(0),HE(1),HE(2),\cdots,HE(r-1)>$}
\State ContianIP $\Leftarrow true$
\For {$i \in [1 , r-2]$}
\State $BL(i)\Leftarrow (CL_0 \bigoplus CL_i)$
\State $BL(i+1)\Leftarrow (CL_0 \bigoplus CL_{i+1})$
\If { the left $k - \alpha$ bits of $BL(i)$ not equal to the right $k-\alpha$ bits oof $BL(i+1)$}
\State ContianIP $\Leftarrow true$
\State Break
\EndIf
\EndFor
\If { ContianIP == false}
\State Continue
\EndIf
\State restore $a_0$ from $<CL_0,CL_1,CL_2,\cdots,CL_{r-1}>$
\State get $ULE(a_0)$
\State $C(a_0)\Leftarrow -g*ln(\frac{g-|ULE(a_0)|}{g-g*{\psi}^{r}})$
\If {$C(a_0) < \theta$}
\State Continue
\EndIf
\State insert $a_0$ and $C(a_0)$ to $SPL$
\EndFor 
\State Return $SPL$
\end{algorithmic}
\end{algorithm}

In algorithm \ref{restoreIP}, $LA(i,j)$ points to the $j$th linear estimator in the $i$ $LA$. The weight of $LA(i,j)$, written as $|LA(i,j)|$ is the number of "1"bits in it. 

Both algorithm \ref{updateDHLA} and \ref{restoreIP} have no data conflict which make sure that we can launch them on several threads concurrence to get a high processing speed. In the next section, we will introduce how to deploy our super points detection algorithm on GPU.
\section{Detect super point on GPU}
GPU is designed for graphic processing originally. It has plenty physic cores to deal with different pixels parallel. Because its power of computation, GPU is widely used in parallel computing as a popular platform. GPU contains more cores than CPU and has a higher memory accessing speed. When scanning different data with the same instructions (single instruction multi data, SIMD), GPU has much better performance than CPU does. When deploying on GPU, our algorithm can scan packets and restore super points parallel. At the begin of a time period, we allocate $DHLA$ on GPU global memory with every bit set to 0. Then GPU threads can update and read this $DHLA$ directly.

\subsection{Scanning packets on GPU}
From each edge router, we can monitor a packet stream and extract IP address pairs from this stream. Super points will be detected from IP address pairs. Packets could be mirrored to a monitor sever alongside a router. If we want to use GPU to scan IP address pairs, we have to copy them from monitor server's memory to GPU's global memory. An IP address pair is a vector of two IP addresses extracting from a packet. It is low efficient to copy every IP address pair one by one because the coping procedure requires additional starting and ending operations. We allocate a IP address buffer on both monitor server and GPU. This IP pair buffer could store $\mu$ IP pairs and its size is $8*\mu$ bytes. Figure \ref{GPU_scanningIPpairs} illustrates how to scan IP pairs on monitor server. When monitor server receives packets from router, it will extract IP pairs and store them to IP pair buffer on its memory. When the IP buffer has stored $\mu$ IP pairs, server monitor will copy it to GPU buffer. GPU card is locating on monitor server's board connecting it with PCIe3.0 bus. After copying to GPU, monitor server's buffer will be cleared for storing another $\mu$ IP pairs. When receiving $\mu$ IP pairs from monitor server, GPU will launch $\mu$ threads running algorithm \ref{updateDHLA} to process this $\mu$ IP pairs parallel. In algorithm \ref{updateDHLA}, a bit could be set by several threads at the same time without mistakes. So these $\mu$ threads could update $DHLA$ parallel. 

\begin{figure}[!ht]
\centering
\includegraphics[width=0.47\textwidth]{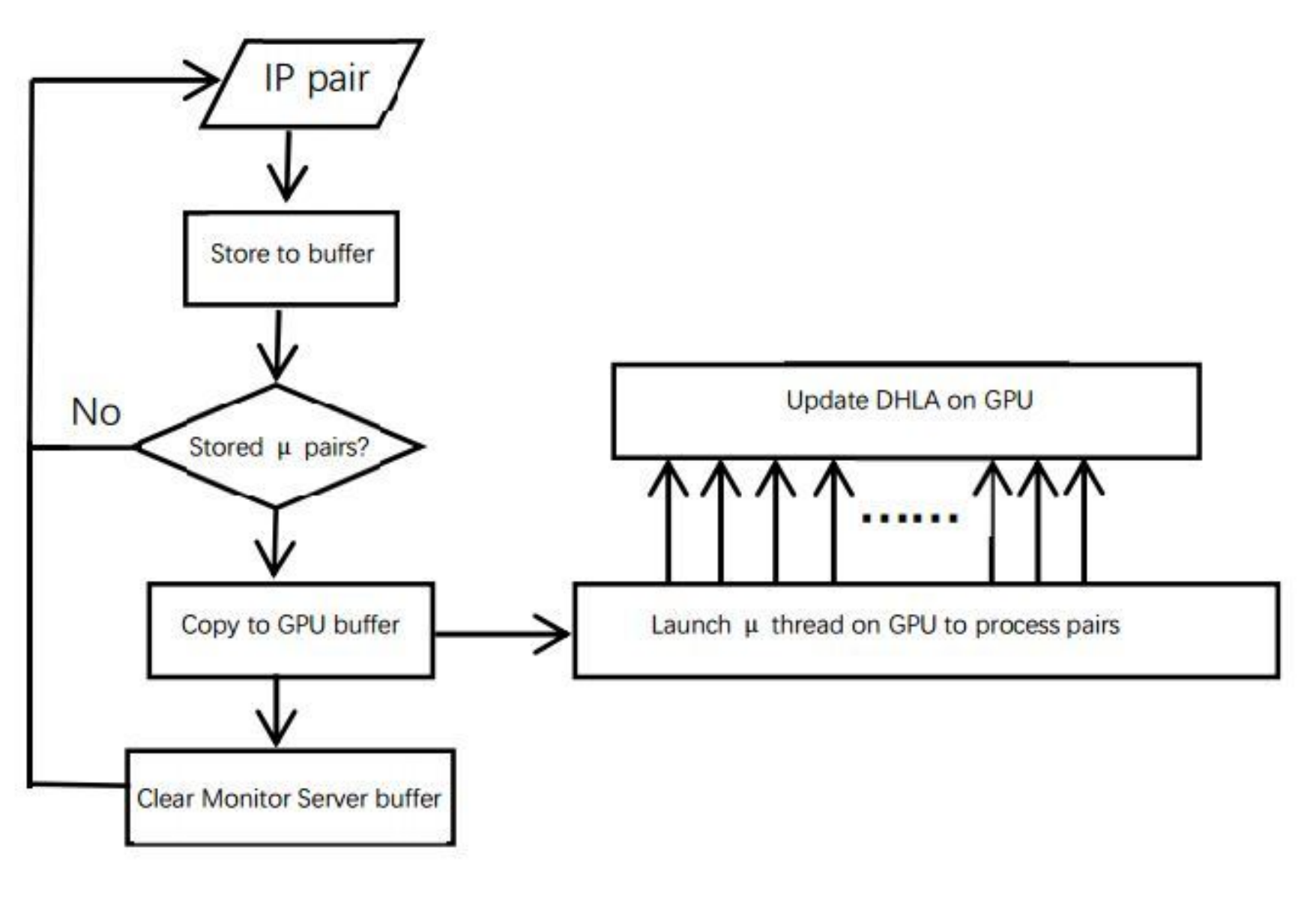}
\caption{IP pairs scanning and copying}
\label{GPU_scanningIPpairs}
\end{figure}
\subsection{super point cardinality estimation on GPU}
At the end of a time period, we can estimate super point cardinality from $DHLA$. If there are more than one edge routers, we can merge their local $DHLA$s to a global one by bit wise "AND" operation and use the global $DHLA$ to estimate super point cardinality. In the rest of this section, $DHLA$ means the global $DHLA$ merging from all local monitor servers.

Super point cardinality estimation could be divided into three parts: hot linear estimators identifying, super points detection, cardinality estimation. All of these parts could be implement on GPU parallel. 

To get $HE(i)$, we have to count the zero bit number of every linear estimator. There are total $r*2^k$ linear estimators in $DHLA$. The number of linear estimators may be more than the maximum of available threads on GPU. We use a fixed number of threads to test all of these linear estimators and every thread will test more than one linear estimators. The hot linear estimator checking algorithm is very simple, counting the number of zero bits and comparing the number with a positive integer. This will not consume much time in GPU.

After getting $HE(i)$ of every $LA$, we will try to restore super points from them. In algorithm \ref{restoreIP}, select $r$ linear estimators from every $HE(i)$ to generate a hot estimator tuple and test every of this tuple one by one. This method is simple but very low efficient. Because there are $\prod_{i=0}^{r}{|HE(i)|}$ tuples, checking all of these tuples one by one will consume much time, especially when there are many super points. 

Observing that if $CL_0 \in HE_0$, $CL_1 \in HE_1$, $CL_2 \in HE_2$ and the left $k-\alpha$ bits of $CL_0 \bigoplus CL_1$ not equal to the right $k-\alpha$ bits of $CL_0 \bigoplus CL_1$, there is no need to check any tuples that contain $CL_0$ $CL_1$ and $CL_2$. So we will restore super points incrementally from $HE(1)$ to $HE(r-1)$ and store the intermediate result. But if we want to run parallel in GPU, we have to allocate two buffers for intermediate result: one for reading and the other for storing. Because $BL(i)$ depends on $DH_0$, in the intermediate result we will store which linear estimator is used to reconstruct $BL$ and the restored partial IP address. The first three $HE$ have different process with the others, so we will discuss them separately.

Firstly, we allocate two buffers in GPU global memory, $SubIP_1$ and $SubIP_2$, to store partial IP addresses and $DH_0$. The size of these buffers, $|SubIP_1|$ or $|SubIP_2|$, is the number of partial IP addresses it stores. For the first three $HE$s, there are $|HE_0|*|HE_1|*|HE_2|$ tuples like $<CL_0,CL_1,CL_2>$ where $CL_i \in HE_i$. Denote the set of these tuples as $TP_{<0,1,2>}$. Suppose we launch $TN$ threads on GPU to deal with $TP_{<0,1,2>}$. Let $a$, $b$ be two non-negative integers that $a*TN+b=|HE_0|*|HE_1|*|HE_2|$. Each of the first $b$ thread will test $a+1$ tuples and each of the rest $TN-b$ thread will check $a$ tuples. Let $TP_{<0,1,2>}(i)$ represent the set of tuples in $TP_{<0,1,2>}$ to be scanned by the $i$th thread. Algorithm \ref{alg_GPU_BL123} shows how every threads launched in GPU to deal with $TP_{<0,1,2>}$.

\begin{algorithm}                       
\caption{scanning $HE_0$, $HE_1$, $HE_2$ on GPU}          
\label{alg_GPU_BL123}                            
\begin{algorithmic}                    
\Require {\\ $TP_{<0,1,2>}$ }  
\State $tid \Leftarrow$ thread ID 
\State get $TP_{<0,1,2>}(tid)$ from $TP_{<0,1,2>}$
\State $SSubIP \Leftarrow SubIP_1$
\For { $<CL_0,CL_1,CL_2> \in TP_{<0,1,2>}(tid) $}
\State $blk_1 \Leftarrow CL_0 \bigoplus CL_1$
\State $blk_2 \Leftarrow CL_0 \bigoplus CL_2$
\If {the left $k-\alpha$ bits of $blk_1$ not equal to the\\
   right $k-\alpha$ bits of $blk_2$}
\State Continue
\EndIf
\State $subIP\Leftarrow$ concatenate the left $\alpha$ bits of $blk_2$ to $blk_1$ 
\State Insert $<subIP, CL_0>$ into $SSubIP$
\EndFor  
\end{algorithmic}
\end{algorithm}

The roles of these two buffers may change in different stages. $SSubIP$ points to the buffer used for storing sub IP addresses in a stage and $RSubIP$ points to the buffer used for reading sub addresses. When scanning $TP_{<0,1,2>}$, $SubIP_1$ is used for storing sub IP addresses. After scanning $TP_{<0,1,2>}$, $SubIP_1$ will contain $|SubIP_1|$ candidate sub IP addresses. Together with $HE_3$, we can generate another tuples $TP_{<0,1,2,3>}$ like $<CL_0, subip, CL_3>$ where $CL_0 \in SubIP_1$, $subip \in SubIP_1$, $CL_3 \in HE_3$. When scanning $TP_{<0,1,2,3>}$, $SubIP_2$ will be used for storing sub IP addresses, and scanning candidate partial IP addresses will be gotten from $SubIP_1$. Like $TP_{<0,1,2>}$, each thread will scan a sub set of $TP_{<0,1,2,3>}$. The rest $HE$ will be processed like this. $SubIP_1$ and $SubIP_2$ will play as storing buffer alternatively. Algorithm \ref{alg_GPU_BL_after3} shows how to scan the rest $HE$

\begin{algorithm}                       
\caption{scanning HE(i) on GPU}          
\label{alg_GPU_BL_after3}                            
\begin{algorithmic}                    
\Require {\\ $TP_{<0,1,2,…,i>}$ 
\\Super point list $SPL$ \\ }  
\State $tid \Leftarrow$ thread ID 
\State get $TP_{<0,1,2,…,i>}(tid)$ from $TP_{<0,1,2,…,i>}$
\If{ $i$ is an even number}
\State $SSubIP \Leftarrow SubIP_1$
\Else
\State $SSubIP \Leftarrow SubIP_2$
\EndIf
\For { $<CL_0,subip,CL_i> \in TP_{<0,1,2,…,i>}(tid) $}
\State $blk_2 \Leftarrow CL_0 \bigoplus CL_2$
\If {the left $k-\alpha$ bits of $subip$ not equal to the\\
   right $k-\alpha$ bits of $blk_2$}
\State Continue
\EndIf
\State $sub_2\Leftarrow$ concatenate the left $\alpha$ bits of $blk_2$ to $subip$ 
\If{$i$ equal to $r-1$}
\State Insert $sub_2$ into $SPL$
\Else
\State Insert $<sub_2, CL_0>$ into $SSubIP$
\EndIf
\EndFor  
\end{algorithmic}
\end{algorithm}

By algorithm \ref{alg_GPU_BL_after3}, we will restore candidate super points when scanning the last $HE$. Candidate super points are stored in a list for further checking and cardinality estimation. When dealing with $SPL$, we still start fix number of threads and each thread scans a sub set of candidate super points in $SPL$. For a candidate super point $a_0$, a thread will first get $ULE(a_0)$ from $DHLA$ and count the zero bit number $z$ in it. Then $a_0$'s cardinality could be calculated by equation \ref{eqt_cardinalityEstimating}. If the estimating value is more than $\theta$, we will report it as a super point together with its estimating cardinality.

Because the high randomness of $DHG$, $DHLA$ requires much smaller memory than other algorithms do. So our algorithm can run on a cheap GPU to deal with high speed networks. In the next section we will give the real world traffic experiments of our algorithm comparing with several state-of-the-art algorithms.
\section{Experiments}
We use a real world traffic to evaluate the performance of our super points detection algorithm: Double direction Hash Super points detection Algorithm (DHSA). The traffic is OC192 downloading from Caida\cite{expdata:Caida}. This traffic contains one hour packets last from 13:00 to 14:00 on February 19, 2015. In our experiment, the time period is set to 5 minutes and the threshold for super point is 1024. Under this time period, the one-hour traffic is divided into 12 sub traffics and we will detect super points from them. Table \ref{tbl-trafficInf} shows the detail information of every sub traffic. 
\begin{table*}
\centering
\caption{Traffic information}
\label{tbl-trafficInf}
\begin{tabular}{c}                                                                                                                                                                                                                           
\centering
\includegraphics[width=0.7\textwidth]{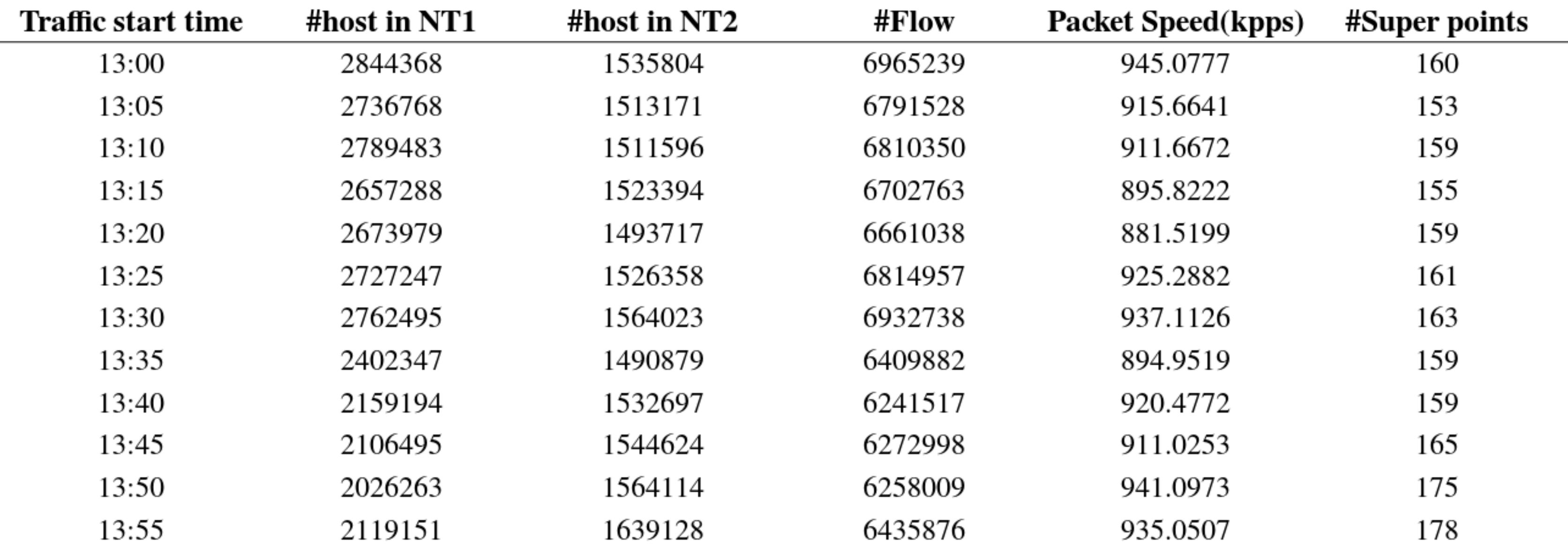}
\end{tabular}
\end{table*}

Accuracy, time consumption and memory requirement are three criteria to evaluate super point detection algorithm. False positive ratio FPR and false negative ratio FNR are two classic rates for accuracy comparing. They are given in definition \ref{def-fpr_fnr}.
\begin{definition}[FPR/FNR]
\label{def-fpr_fnr}
For a traffic with $N$ super points, an algorithm detects $N'$ super points. In the $N'$ detected super points, there are $N^+$ hosts which are not super points. And there are $N^-$ super points which are not detected by the algorithm. FPR means the ratio of $N^+$ to $N$ and FNR means the ratio of $N^-$ to $N$.
\end{definition}

FPR may decrease with the increase of FNR. If an algorithm reports more hosts as super point, its FNR will decrease but FPR will increase. So we use the sum of FPR and FNR, total false rate TFR, to evaluate the accuracy of an algorithm. 

The parameters of our algorithm DHSA, such as $r$ $g$ $k$, will affect its accuracy. Especially the parameter $g$, which also determines the accuracy of cardinality estimation, has great influence to DHSA. Figure \ref{DHSA_TFP_differentParameters} illustrates DHSA's TFR under different parameters when $g$ varies from 256 to 4096 on the first sub trafffic.

\begin{figure*}
\centering
\includegraphics[width=0.9\textwidth]{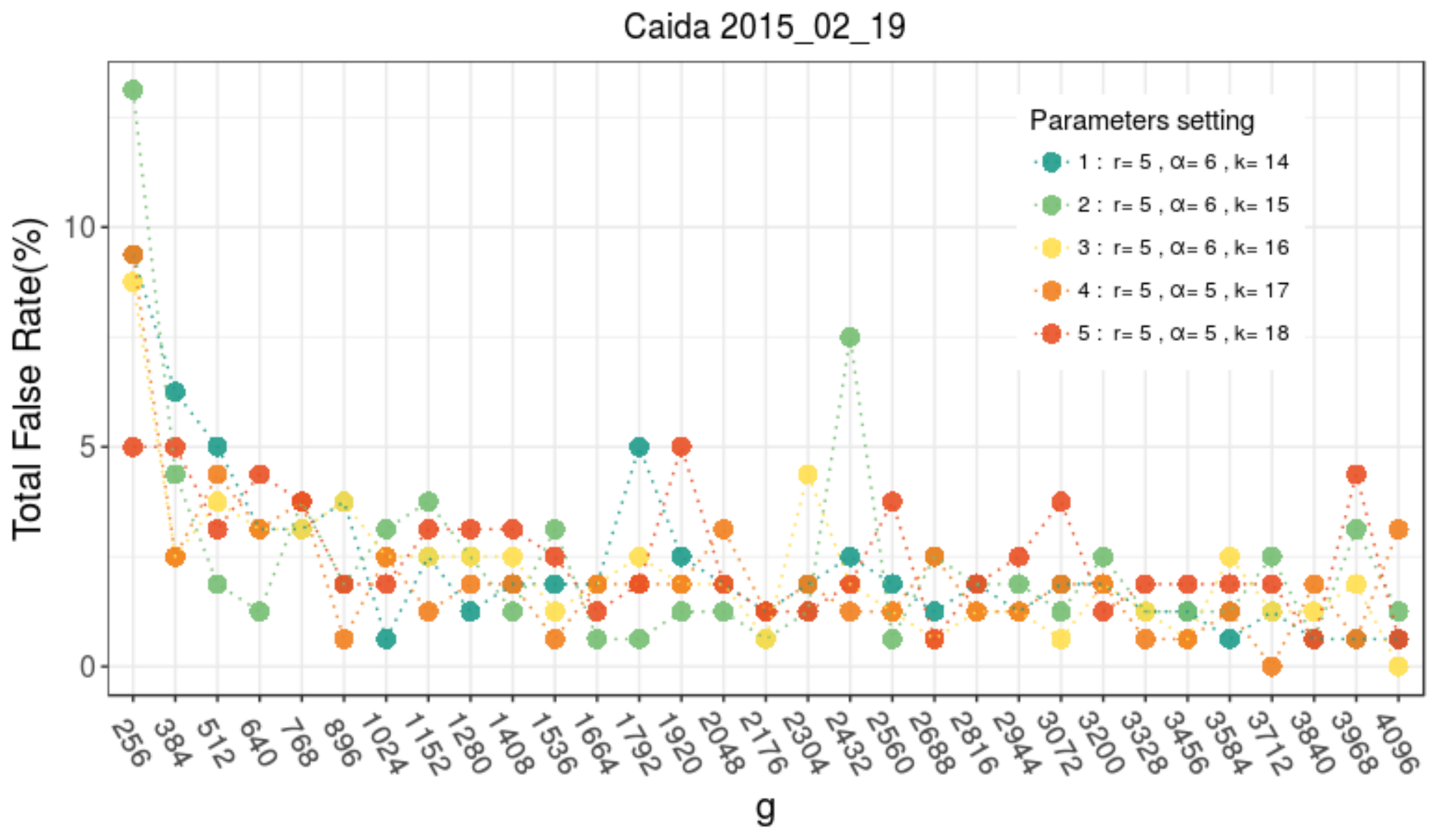}
\caption{TFR of DHSA under different parameters}
\label{DHSA_TFP_differentParameters}
\end{figure*}

TFR of DHSA is very high when $g$ is littler than 512. But it decreases slowly when $g$ increases after $1024$. When $g$ is set to 1024, DHSA can get an excellent result. In the rest of our experiments, we set $g=1024$, $r=5$, $\alpha=6$ and $k=14$.

To compare the performance of DHSA with other algorithms, we use DCDS\cite{HSD:ADataStreamingMethodMonitorHostConnectionDegreeHighSpeed}, VBFA\cite{HSD:DetectionSuperpointsVectorBloomFilter}, GSE \cite{HSD:GPU:2014:AGrandSpreadEstimatorUsingGPU} to compare with it. All of these algorithms are running on a common GPU card: GTX950 with 640 CUDA cores and 4 GB memory.

We compare the FPR, FNR and consuming time of these algorithms as shown in figure \ref{FPR_hsd_rlt_t12}, \ref{FNR_hsd_rlt_t12}, \ref{totalTime_hsd_rlt_t12}.

\begin{figure}[!ht]
\centering
\includegraphics[width=0.47\textwidth]{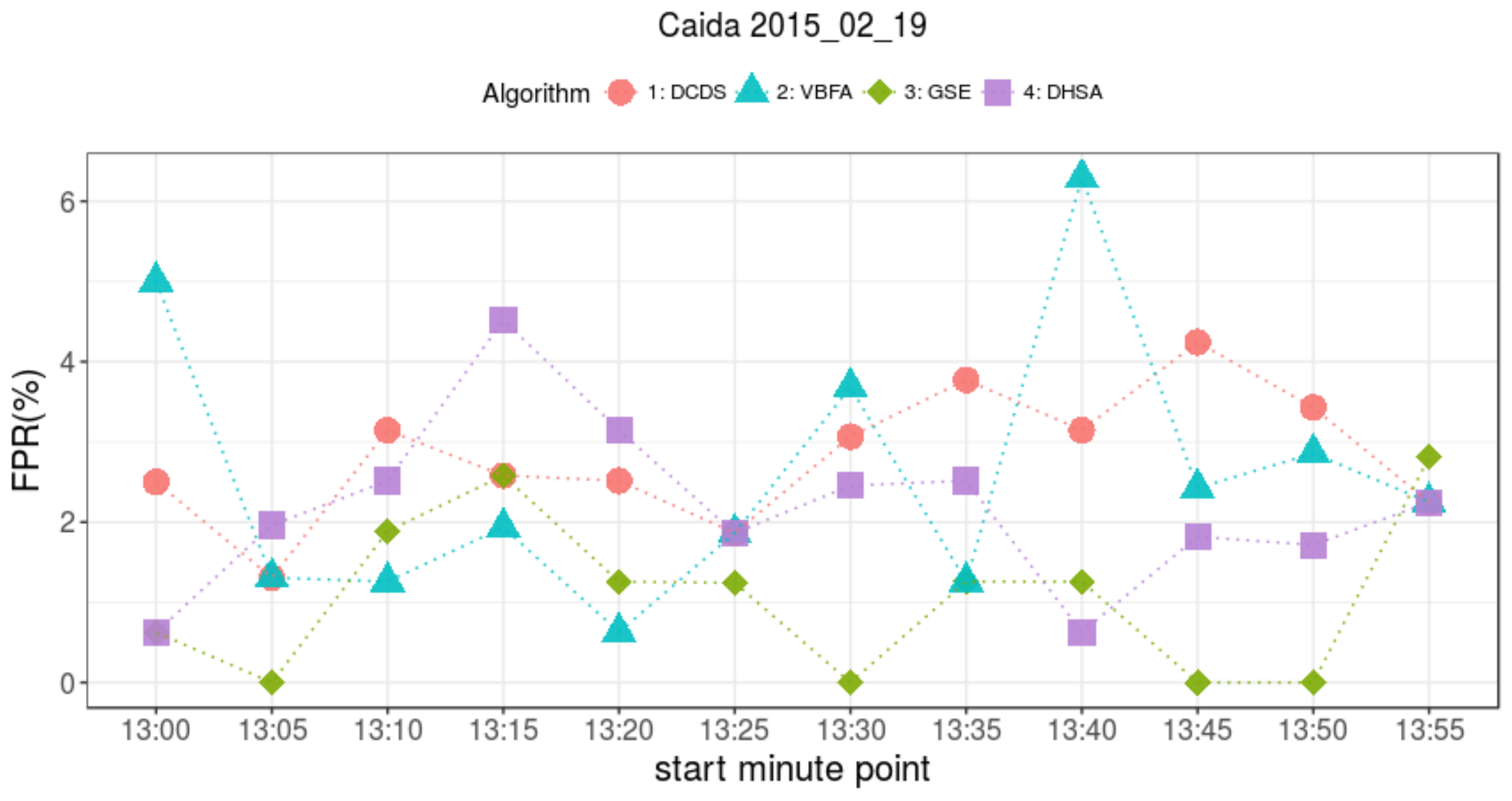}
\caption{False positive rate of different algorithms}
\label{FPR_hsd_rlt_t12}
\end{figure}

\begin{figure}[!ht]
\centering
\includegraphics[width=0.47\textwidth]{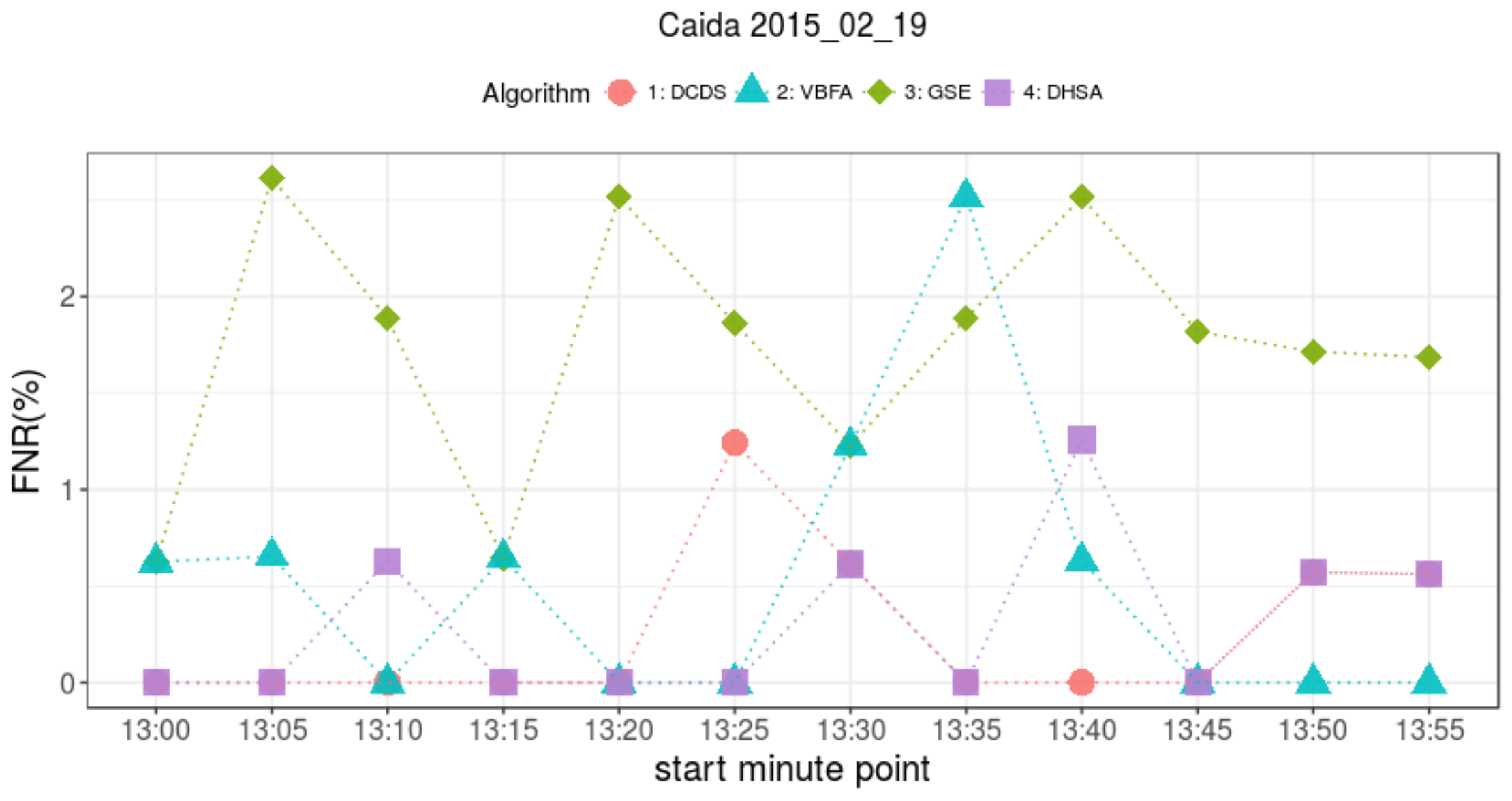}
\caption{False negative rate of different algorithms}
\label{FNR_hsd_rlt_t12}
\end{figure}

\begin{figure}[!ht]
\centering
\includegraphics[width=0.47\textwidth]{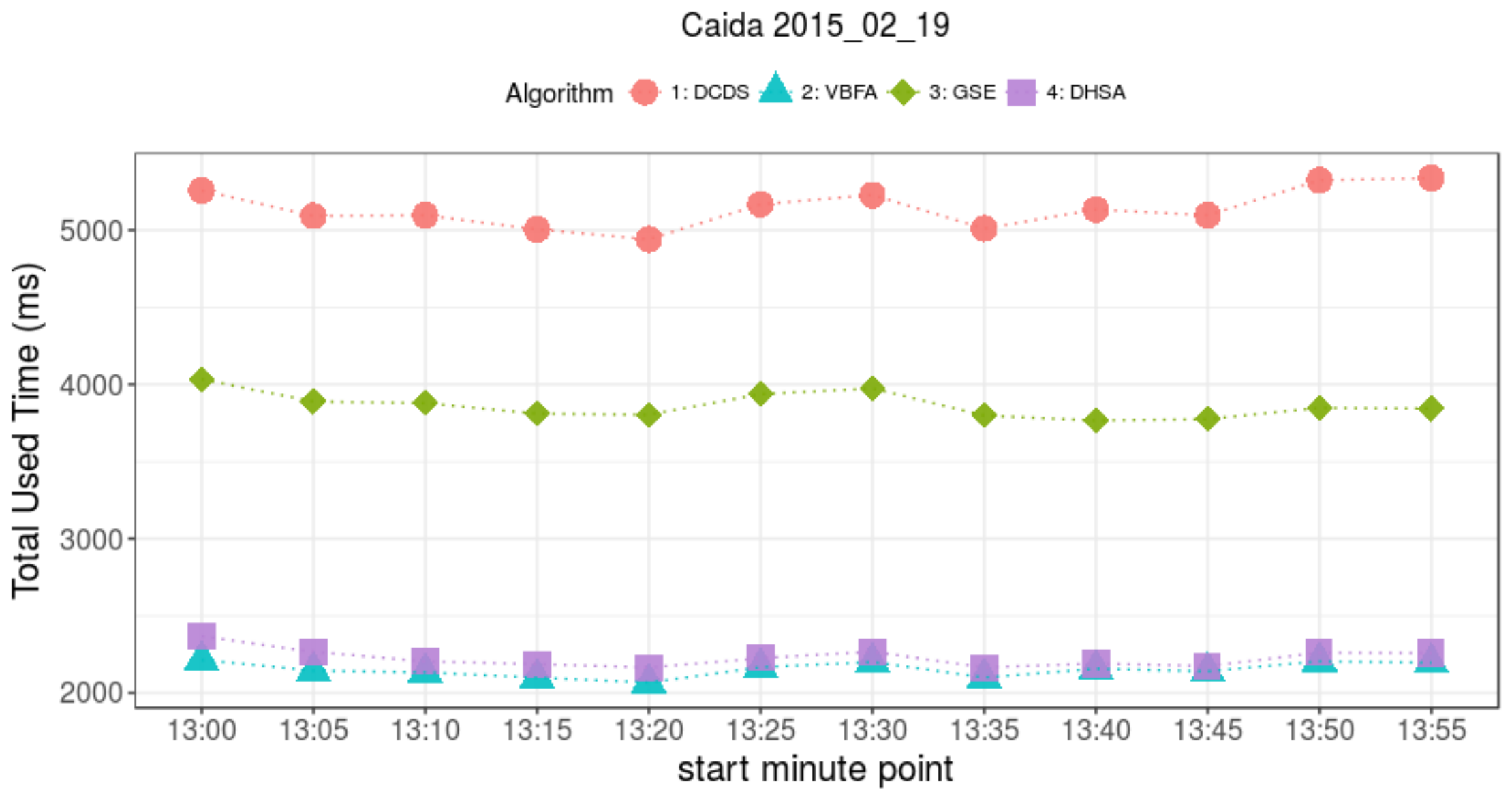}
\caption{Time consumption of different algorithms}
\label{totalTime_hsd_rlt_t12}
\end{figure}

GSE has a lower FPR than other algorithms. It can remove fake super points according the estimating flow number. But GSE may remove some super points too, which causes it has a higher FNR. Because it uses discrete bits to record host's cardinality, collecting all of these bits together when estimate super points cardinality will use lots of time. DCDS uses CRT when storing host's cardinality. CRT has a better randomness which makes DCDS has a lower FNR. But CRT is very complex containing many operations. So DCDS's speed is the lowest among all of these algorithms. DHSA and VBFA have lower FPR than DCDS and lower FNR than GSE. Time consumed by DHSA is only a little more than that of VBFA. But DHSA's FPR and FNR is lower than VBFA's because DHG makes full use of every linear estimator. What's more, DHSA consumes only one-fourth memory that used by VBFA. Table \ref{tbl_avg_hsd_rlt} lists the average result of all the 12 sub traffics.

\begin{table*}
\centering
\caption{Average detection result}
\label{tbl_avg_hsd_rlt}
\begin{tabular}{c}                                                                                                                                                                                                                           
\centering
\includegraphics[width=0.7\textwidth]{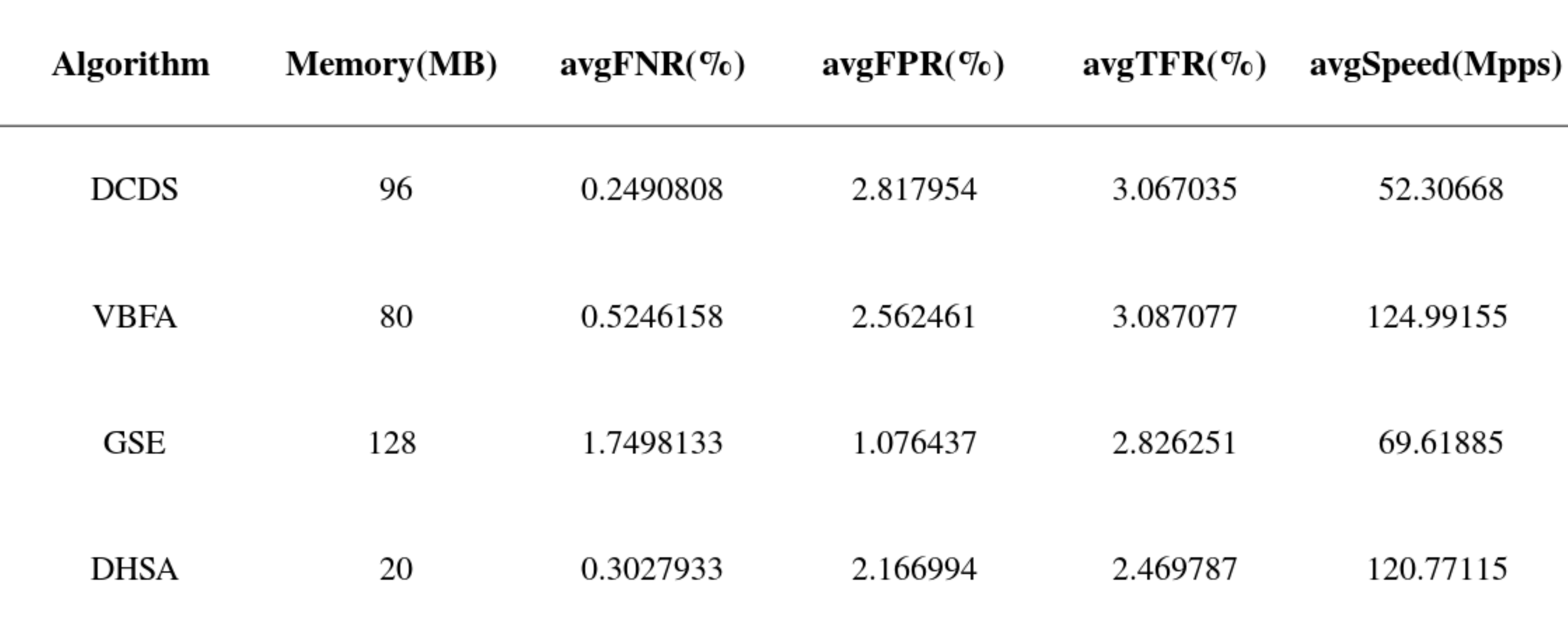}
\end{tabular}
\end{table*}

From table \ref{tbl_avg_hsd_rlt} we can see that, DHSA uses the smallest memory, smaller than one-fourth memory used by others, and has the lowest total false rate. DHSA also has a fast speed to deal with core networks traffic in real time.

Packets of our experiment traffic have average size of 800K bit. With this value we can using traffic size processing speed to replace packets process speed. From the perspective of traffic size, DHSA can dealing with 750 Gb/s traffics ($\frac{120*800*8}{1024}$) and VBFA can dealing with 781.25 Gb/s traffics ($\frac{125*800*8}{1024}$). Because host could be divided by their right bits into different part, and processing different parts separately. So all super point detection algorithms could be applied to higher speed traffic by increasing memory. 
From table \ref{tbl_avg_hsd_rlt} we can see that, VBFA requires 80 MB of memory to deal with traffic with 1 Mpps and DHSA requires only 20 MB of memory. For a GPU with 4 GB of memory, VBFA can process traffic with speed at most 51 Mpps (1 Mpps * 4096 MB/80 MB) and DHSA can process traffic with speed 204 Mpps (1 Mpps * 4096 MB/20 MB). Considering processing speed and memory requirement, DHSA has the best capability to cope with core network with bandwidth as high as 120 Mpps( or 750 Gb/s) on this cheap GPU.
\section{Conclusion}
GPU is a popular parallel platform and its price will grow with its global memory increase.  
A small memory requirement of an algorithm running on GPU will be economic. Our memory efficient super point detection algorithm DHSA is a such one that has the highest accuracy and using only smaller than one-fourth memory of others'. The excellent performance of DHSA comes from a novel designed double direction hash functions group DHFG. DHFG has a high randomness which makes DHSA can make full use of every bit in memory. When detecting super point, we not only need a high randomness functions but also the ability to restore super points from memory. Unlike other hash functions, DHFG can reconstruct hosts from the set of hashed values. Reversible, random and simple in operations, DHFG let DHSA become the most economical and efficient choice for core network's super point detection.

\iftoggle{ACM}{
\bibliographystyle{ACM-Reference-Format}
}

\bibliography{..//..//ref} 

\end{document}